\def\Ce{{\mathscr C}}
\def\Ne{{\mathscr N}}
\def\Pe{{\mathscr P}}
\def\Ve{{\mathscr V}}
\def\R{{\mathbb R}}        
\def\Sbb{{\mathbb S}}      
\def\Ec{{\mathcal E}}
\def\Fc{{\mathcal F}}
\def\Ic{{\mathcal I}}
\def\Vc{{\mathcal V}}
\def\IID{{\textrm{IID}}}
\def\ON{{\textrm{ON}}}
\def\sopt{{\textrm{sopt}}}
\def\comp{{\textrm{comp}}}
\newcommand{\vect}[1]{\bm{#1}} 
\DeclareMathOperator*{\E}{{\mathbb E}}        
\let\Pr\undefined 
\DeclareMathOperator{\Pr}{{\mathbb P}}        
\DeclareMathOperator{\tr}{trace}
\newcommand{\D}{\displaystyle}
\newcommand{\Sst}{\scriptstyle} 
\def\registered{{\ooalign{\hfil\raise .00ex\hbox{\scriptsize R}\hfil\crcr\mathhexbox20D}}}
\newtheorem{theorem}{Theorem}[section]
\newtheorem{property}{Property}
\newenvironment{remark}[1][Remark]{\begin{trivlist}
\item[\hskip \labelsep {\bfseries #1}]}{\end{trivlist}}
\begin{document}

\ititle{Network Estimation and Packet Delivery Prediction for Control
  over Wireless Mesh Networks}
\isubtitle{The work was supported by the EU project FeedNetBack,
the Swedish Research Council, the Swedish Strategic Research
Foundation, the Swedish Governmental Agency for Innovation Systems,
and the Knut and Alice Wallenberg Foundation.}
\iauthor{Phoebus Chen, Chithrupa Ramesh, and Karl H. Johansson}
\idate{2010}
\irefnr{TRITA-EE:043}

\iaddress{ACCESS Linnaeus Centre\\
  Automatic Control\\
  School of Electrical Engineering\\
  KTH Royal Institute of Technology\\
  SE-100 44 Stockholm, Sweden}
\makeititle

\title{Networked Controller Design using Packet Delivery Prediction in Mesh Networks}

\begin{abstract}                %

Much of the current theory of networked control systems uses
simple point-to-point communication models as an abstraction of the
underlying network.  As a result, the controller has very limited
information on the network conditions and performs suboptimally.  This
work models the underlying wireless multihop mesh network as a graph of links
with transmission success probabilities, and uses a recursive Bayesian
estimator to provide packet delivery predictions to the controller.
The predictions are a joint probability distribution on future packet
delivery sequences, and thus capture correlations between successive
packet deliveries.  We look at finite horizon LQG control over a lossy
actuation channel and a perfect sensing channel, both without delay, to
study how the controller can compensate for predicted network outages.

\end{abstract}

\section{Introduction}
\label{sec:introduction}

Increasingly, control systems are operated over large-scale, networked
infrastructures.  In fact, several companies today are introducing
devices that communicate over low-power wireless mesh networks for
industrial automation and process control
\cite{WINA:url,ISA-SP100:url}.  While wireless mesh networks can
connect control processes that are physically spread out over a large
space to save wiring costs, these networks are difficult to design,
provision, and manage \cite{chlamtac:2003,bruno:2005}.
Furthermore, wireless communication is inherently unreliable,
introducing packet losses and delays, which are detrimental to control
system performance and stability.

Research in the area of Networked Control Systems (NCSs)
\cite{hespanha-procIEEE:2007} addresses how to design control systems
which can account for the lossy, delayed communication channels
introduced by a network.  Traditional tasks in control systems design,
like stability/performance analysis and controller/estimator
synthesis, are revisited, with network models providing statistics
about packet losses and delays.  In the process, the studies highlight
the benefits and drawbacks of different system architectures.  For
example, Figure~\ref{fig:dist_ctrl_mesh_net} depicts the general system
architecture of a networked control system over a mesh network
proposed by Robinson and Kumar \cite{robinson:2007}.  A fundamental architecture problem
is how to choose the best location to place the controllers, if they
can be placed at any of the sensors, actuators, or communication relay
nodes in the network.  One insight from Schenato et al. \cite{schenato-procIEEE:2007}
is that if the controller can know whether the control packet reaches
the actuator, e.g., we place the controller at the
actuator, then the optimal LQG controller and estimator can be
designed separately (the separation principle).

\begin{figure}[hb]
\begin{center}
\includegraphics[width=10cm]{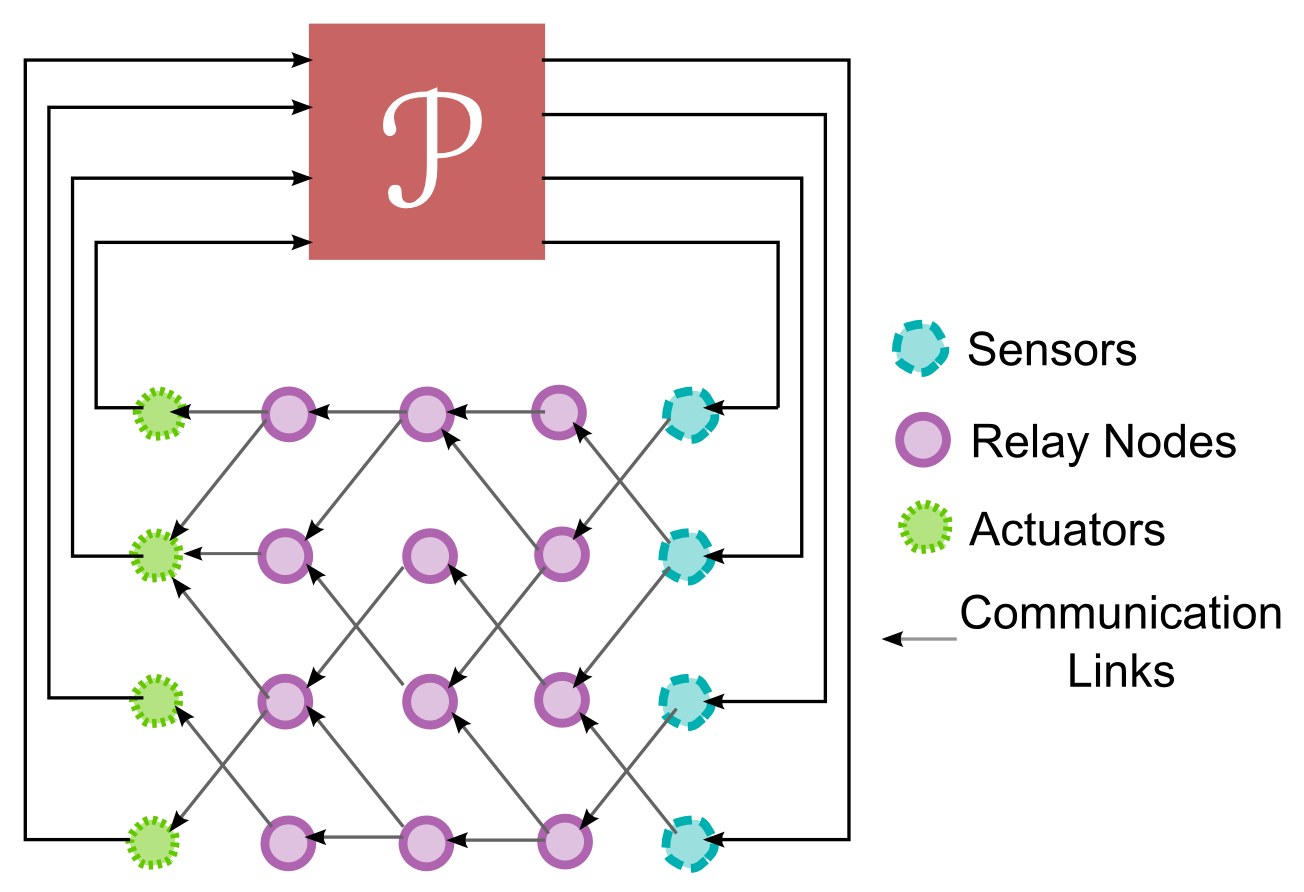}
\caption{A networked control system over a mesh network, where the
  controllers can be located on any node.}
\label{fig:dist_ctrl_mesh_net}
\end{center}
\end{figure}

To gain more insights on how to architect and design NCSs, two
limitations in the approach of many current NCS research studies need
to be addressed.  The first limitation is the use of simple models of
packet delivery over a point-to-point link or a star network topology
to represent the network, which are often multihop and more complex.
The second limitation is the treatment of the network as something
designed and fixed a priori before the design of the control system.
Very little information is passed through the interface between the
network and the control system, limiting the interaction between the
two ``layers'' to tune the controller to the network conditions, and
vice versa.

\subsection{Related Works}

Schenato et al. \cite{schenato-procIEEE:2007} and Ishii \cite{ishii:2008} study stability
and controller synthesis for different control system architectures,
but they both model networks as i.i.d. Bernoulli processes that drop
packets on a single link.  The information passed through the
interface between the network and the control system is the packet
drop probability of the link, which is assumed to be known and fixed.
Seiler and Sengupta \cite{seiler:2005} study stability and $\mathcal{H}_\infty$
controller synthesis when the network is modeled as a packet-dropping
link described by a two-state Markov chain (Gilbert-Elliott model),
where the information passed through the network-controller interface
are the transition probabilities of the Markov chain.  Elia \cite{elia:2005}
studies stability and the synthesis of a stabilizing controller when
the network is represented by an LTI system with stochastic
disturbances modeled as parallel, independent, multiplicative fading
channels.

Some related work in NCSs do use models of multihop networks.  For
instance, work on consensus of multi-agent systems
\cite{olfati-saber:2007} typically study how the connectivity
graph(s) provided by the network affects the convergence of the
system, and is not focused on modeling the links.
Robinson and Kumar \cite{robinson:2007} study the optimal placement of a controller in
a multihop network with i.i.d. Bernoulli packet-dropping links, where
the packet drop probability is known to the controller.
Gupta et al. \cite{gupta:2009} study how to optimally process and forward sensor
measurements at each node in a multihop network for optimal LQG
control, and analyze stability when packet drops on the links are
modeled as spatially-independent Bernoulli, spatially-independent
Gilbert-Elliott, or memoryless spatially-correlated
processes.\footnote{Here, ``spatially'' means ``with respect to other
  links.''}  Varagnolo et al. \cite{varagnolo-cdc:2008} compare the performance of a
time-varying Kalman filter on a wireless TDMA mesh network under
unicast routing and constrained flooding.  The network model describes
the routing topology and schedule of an implemented communication
protocol, TSMP \cite{pister:2008}, but it assumes that transmission
successes on the links are spatially-independent and memoryless.
Both Gupta et al. \cite{gupta:2009} and Varagnolo et al. \cite{varagnolo-cdc:2008} are concerned
with estimation when packet drops occur on the sensing channel, and the 
estimators do not need to know network parameters like the packet loss
probability.

\subsection{Contributions}

Our approach is a step toward using more sophisticated, multihop
network models and passing more information through the interface between
the controller and the network.  Similar to Gupta et al. \cite{gupta:2009}, we
model the network routing topology as a graph of independent links,
where transmission success on each link is described by a two-state
Markov chain.  The network model consists of the routing topology and
a global TDMA transmission schedule.  Such a minimalist network model
captures the essence of how a network with bursty links can have
correlated packet deliveries \cite{willig:2002}, which are
particularly bad for control when they result in bursts of packet losses.
Using this model, we propose a network estimator to estimate, without
loss of information, the state of the network given the past
packet deliveries.\footnote{Strictly speaking, we obtain the
probability distribution on the states of the network, not a single
point estimate.}  The network estimate is translated to a joint
probability distribution predicting the success of future packet
deliveries, which is passed through the network-controller interface
so the controller can compensate for unavoidable network outages.  The
network estimator can also be used to notify a network manager when
the network is broken and needs to be reconfigured or reprovisioned, a
direction for future research.

Section~\ref{sec:prob_form} describes our plant and network models.
We propose two network estimators, the
Static Independent links, Hop-by-hop routing, Scheduled (SIHS) network
estimator and the Gilbert-Elliott Independent links, Hop-by-hop routing, 
Scheduled (GEIHS) network estimator in Section~\ref{sec:net_est_pkt_pred}.  
Next, we design a finite-horizon, Future-Packet-Delivery-optimized 
(FPD) LQG controller to utilize the packet delivery predictions 
provided by the network estimators, presented in Section~\ref{sec:control_syn}.
Section~\ref{sec:ex_sim} provides an example and simulations
demonstrating how the GEIHS network estimator combined with the FPD
controller can provide better performance than a classical LQG
controller or a controller assuming i.i.d. packet deliveries.
Finally, Section~\ref{sec:conclusions} describes the limitations of
our approach and future work.

\section{Problem Formulation}
\label{sec:prob_form}

\begin{figure}
\begin{center}
\includegraphics[width=8cm]{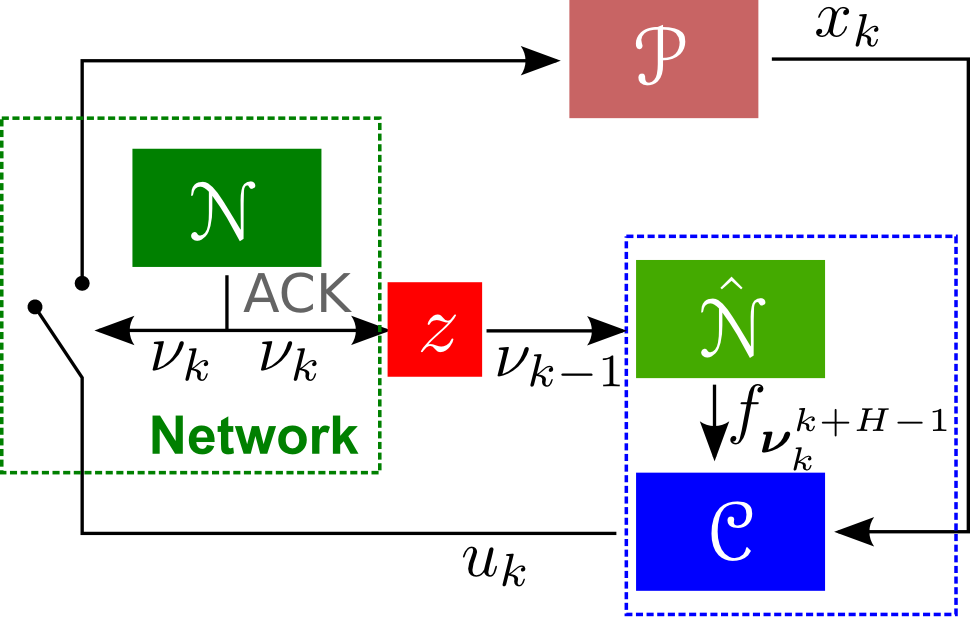}
\caption{A control loop for plant $\Pe$ with the network on the
  actuation channel. The network estimator $\hat{\Ne}$ passes packet
  delivery predictions $f_{\vect{\nu}_{^{k}}^{_{k+H-1}}}$ to the FPD
  controller $\Ce$, with past packet delivery information obtained
  from the network $\Ne$ over an acknowledgement (ACK) channel.}
\label{fig:system_diagram}
\end{center}
\end{figure}

This paper studies an instance of the general system architecture
depicted in Figure~\ref{fig:dist_ctrl_mesh_net}, with a single control
loop containing one sensor and one actuator.  One network estimator
and one controller are placed at the sensor, and we assume that an
end-to-end acknowledgement (ACK) that the controller-to-actuator
packet is delivered is always received at the network estimator, as
shown in Figure~\ref{fig:system_diagram}.  For simplicity, we assume
that the plant dynamics are significantly slower than the end-to-end
packet delivery deadline, so that we can ignore the delay introduced
by the network.  The general problem is to jointly design a network
estimator and controller that can optimally control the plant using
our proposed SIHS and GEIHS network models.  In our problem setup, the
controller is only concerned with the past, present, and future packet
delivery sequence and not with the detailed behavior of the network,
nor can it affect the behavior of the network.  Therefore, the network
estimation problem decouples from the control problem.  The
information passed through the network-controller interface is the
packet delivery sequence, specifically the joint probability
distribution describing the future packet delivery predictions.

\subsection{Plant and Network Models}
\label{sec:sys_net_model}

The state dynamics of the plant $\Pe$ in Figure~\ref{fig:system_diagram}
is given by
\begin{equation} \label{Eq:StateSpace}
x_{k+1} = A x_k + \nu_k B u_k + w_k \quad,
\end{equation}
where $A \in \R^{\ell \times \ell}$, $B \in \R^{\ell \times m}$, and
$w_k$ are i.i.d. zero-mean Gaussian random variables with covariance
matrix $R_w \in \Sbb_+^\ell$, where $\Sbb_+^\ell$ is the set of $\ell
\times \ell$ positive semidefinite matrices. The initial state $x_0$
is a zero-mean Gaussian random variable with covariance matrix $R_0
\in \Sbb_+^\ell$ and is mutually independent of $w_k$.  The binary
random variable $\nu_k$ indicates whether a packet from the controller
reaches the actuator ($\nu_k=1$) or not ($\nu_k=0$), and each $\nu_k$
is independent of $x_0$ and $w_k$ (but the $\nu_k$'s are not
independent of each other).

Let the discrete sampling times for the control system be indexed by
$k$, but let the discrete time for schedule time slots (described
below) be indexed by $t$.  The time slot intervals are smaller than
the sampling intervals.  The time slot when the control packet at
sample time $k$ is generated is denoted $t_k$, and the deadline for
receiving the control packet at the receiver is $t_k'$.  We assume
that $t_k' \le t_{k+1}$ for all $k$.  Figure~\ref{fig:time}
illustrates the relationship between $t$ and $k$.

\begin{figure}
\begin{center}
\includegraphics[width=10cm]{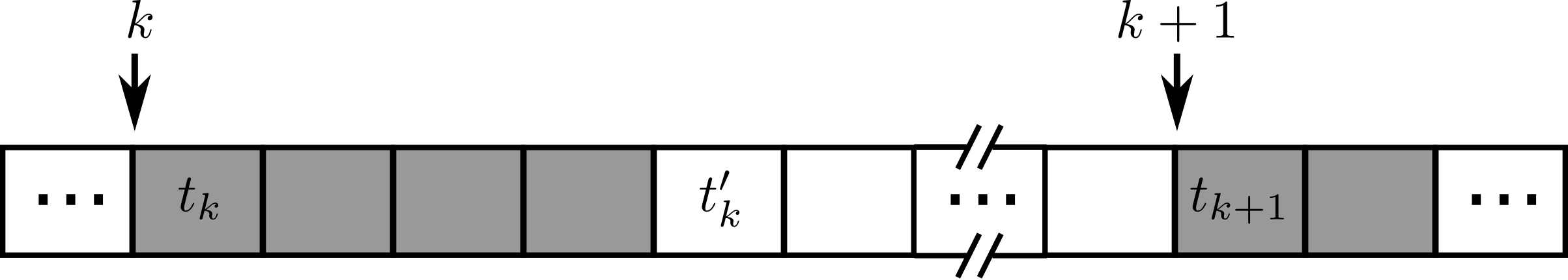}
\caption{The packet containing the control input $u_k$ is generated 
right before time slot $t_k$.  The packet may be in transit through the 
network in the shaded time slots, until right before time slot $t_k'$.  
Thus, time $t_k$ is aligned with the beginning of the time slot.}
\label{fig:time}
\end{center}
\end{figure}

The model of the TDMA wireless mesh network ($\Ne$ in
Figure~\ref{fig:system_diagram}) consists of a routing topology $G$,
a link model describing how the transmission success of a link evolves
over time, and a fixed repeating schedule $\mathbf{F}^{(T)}$.  The
SIHS network model and the GEIHS network model only differ in the link
model.  Each of these components will be described in detail below.

The routing topology is described by $G = (\Vc,\Ec)$, a connected
directed acyclic graph with the set of vertices (nodes) $\Vc =
\{1,\dots,M\}$ and the set of directed edges (links) $\Ec \subseteq
\{(i,j) \: : \: i,j \in \Vc, \: i\ne j\}$, where the number of edges is
denoted $E$.  The source node is denoted $a$ and the sink
(destination) node is denoted $b$.  Only the destination node has no
outgoing edges.

At any moment in time, the links in $G$ can be either be up (succeeds
if attempt to transmit packet) or down (fails if attempt to transmit
packet).  Thus, there are $2^E$ possible topology realizations
$\tilde{G} = (\Vc,\tilde{\Ec})$, where $\tilde{\Ec} \subseteq \Ec$
represents the edges that are up.\footnote{Symbols with a tilde
($\tilde{\cdot}$) denote values that can be taken on by random
variables, and can be the arguments to probability distribution
functions (pdfs).}

At time $t_k$, the actual state of the topology is one of the topology
realizations but it is not known to the network estimator.  With some
abuse of terminology, we define $G^{_{(k)}}$ to be the random variable
representing the state of the topology at time
$t_k$.\footnote{Strictly speaking, $G^{_{(k)}}$ is a function mapping
events to the set of all topology realizations, not to the set of
real numbers.}

This paper considers the network under two link models, the static
link model and the Gilbert-Elliott (G-E) link model.  Both network
models assume all the links in the network are independent.

The static link model assumes the links do not switch between being 
up and down while packets are sent through the network.  Therefore, the 
sequence of topology realizations over time is constant.  While not 
realistic, it leads to the simple network estimator in 
Section~\ref{sec:sl_net_est_pkt_pred} for pedagogical purposes.  The
a priori transmission success probability of link $l = (i,j)$ is $p_l$.

The G-E link model represents each link $l$ by the two-state
Markov chain shown in Figure~\ref{fig:G_E_model}.  At each sample
time $k$, a link in state 0 (down) transitions to state 1 (up) with
probability $p_l^\mathrm{u}$, and a link from state 1 transitions to
state 0 with probability $p_l^\mathrm{d}$.\footnote{We can easily
instead use a G-E link model that advances at each time step $t$,
but it would make the following exposition and notation more
complicated.}  The steady-state probability of being in state 1, which
we use as the a priori probability of the link being up, is
\[
p_l = p_l^\mathrm{u} / (p_l^\mathrm{u} + p_l^\mathrm{d})
\quad .
\]

\begin{figure}
\begin{center}
\includegraphics[width=5cm]{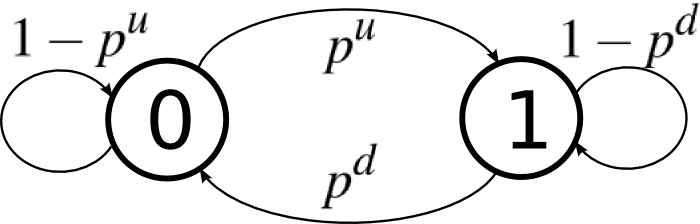}
\caption{Gilbert-Elliott link model}
\label{fig:G_E_model}
\end{center}
\end{figure}

The fixed, repeating schedule of length $T$ is represented by a
sequence of matrices $\mathbf{F}^{(T)} =
(F^{(1)},F^{(2)},\dots,F^{(T)})$, where the matrix $F^{(t - 1
\pmod{T} + 1)}$ represents the links scheduled at time $t$.  The
matrix $F^{(t)} \in \{0,1\}^{M \times M}$ is defined from the set
$\Fc^{(t)} \subseteq \Ec$ containing the links scheduled for
transmission at time $t$.  We assume that nodes can only unicast
packets, meaning that for all nodes $i$, if $(i,j) \in \Fc^{(t)}$ then
for all $v \ne j, (i,v) \not\in \Fc^{(t)}$.  Furthermore, a node
holds onto a packet if the transmission fails and can retransmit the
packet the next time an outgoing link is scheduled (hop-by-hop
routing).  Thus, the matrix $F^{(t)}$ has entries
\[
F_{ij}^{(t)} = \begin{cases}
    1 & \text{if $(i,j) \in \Fc^{(t)}$, or}\\
      & \text{if $i = j$ and $\forall v \in \Vc, (i,v) \not\in \Fc^{(t)}$}\\
    0 & \text{otherwise.}
    \end{cases}
\]

An exact description of the network consists of the sequence of
topology realizations over time and the schedule $\mathbf{F}^{(T)}$.
Assuming a topology realization $\tilde{G}$, the links that are
scheduled and up at any given time $t$ are represented by the matrix
$\tilde{F}^{(t;\tilde{G})} \in \{0,1\}^{M \times M}$, with entries
\begin{equation}
\label{eq:net_realize_matrix}
\tilde{F}_{ij}^{(t;\tilde{G})} = \begin{cases}
    1 & \text{if $(i,j) \in \Fc^{(t)} \cap \tilde{\Ec}$, or }\\
      & \text{if $i = j$  and $\forall v \in \Vc, (i,v) \not\in
        \Fc^{(t)} \cap \tilde{\Ec}$}\\
    0 & \text{otherwise.}
    \end{cases}
\end{equation}

Define the matrix $\tilde{F}^{(t,t';\tilde{G})} =
\tilde{F}^{(t;\tilde{G})}\tilde{F}^{(t+1;\tilde{G})}\dotsm
\tilde{F}^{(t';\tilde{G})}$, such that entry
$\tilde{F}_{ij}^{(t,t';\tilde{G})}$ is 1 if a packet at node $i$ at
time $t$ will be at node $j$ at time $t'$, and is 0 otherwise.  Since
the destination $b$ has no outgoing links, a packet sent from the
source $a$ at time $t$ reaches the destination $b$ at or before time
$t'$ if and only if $\tilde{F}_{ab}^{(t,t';\tilde{G})} = 1$.  To
simplify the notation, let the function $\delta_\kappa$ indicate
whether the packet delivery $\tilde{\nu} \in \{0,1\}$ is consistent
with the topology realization $\tilde{G}$, assuming the packet was
generated at $t_\kappa$, i.e.,
\begin{equation}
\label{eq:indicate_pkt_dlvry_match}
  \delta_\kappa(\tilde{\nu};\tilde{G}) = \begin{cases}
      1 & \text{if $\tilde{\nu} = \tilde{F}_{ab}^{(t_\kappa,t_\kappa';\tilde{G})}$}\\
      0 & \text{otherwise.}
    \end{cases}
\end{equation}
The function assumes the fixed repeating schedule $\mathbf{F}^{(T)}$,
the packet generation time $t_\kappa$, the deadline $t_\kappa'$, the
source $a$, and the destination $b$ are implicitly known.

\subsection{Network Estimators}
\label{sec:net_estimator}

As shown in Figure~\ref{fig:system_diagram}, at each sample time $k$ the
network estimator $\hat{\Ne}$ takes as input the previous packet
delivery $\nu_{k-1}$, estimates the topology realization using the
network model and all past packet deliveries, and outputs the joint
probability distribution of future packet deliveries
$f_{\vect{\nu}_{^{k}}^{_{k+H-1}}}$.  For clarity in the following
exposition, let $\Ve_\kappa \in \{0,1\}$ be the value taken on by the
packet delivery random variable $\nu_\kappa$ at some past sample
time $\kappa$.  Let the vector $\vect{\Ve}_{^0}^{_{k-1}} = [\Ve_0,
\dots, \Ve_{k-1}]$ denote the history of packet deliveries at sample
time $k$, the values taken on by the vector of random variables
$\vect{\nu}_{^0}^{_{k-1}} = [\nu_0, \dots, \nu_{k-1}]$.  Then,
\begin{equation}
\label{eq:pkt_pred}
f_{\vect{\nu}_{^{k}}^{_{k+H-1}}}
(\vect{\tilde{\nu}}_{^0}^{_{H-1}}) =
  \Pr( \vect{\nu}_{^{k}}^{_{k+H-1}} =\vect{\tilde{\nu}}_{^0}^{_{H-1}} |
      \vect{\nu}_{^0}^{_{k-1}} = \vect{\Ve}_{^0}^{_{k-1}})
\end{equation}
is the prediction of the next $H$ packet deliveries, where
$\vect{\nu}_{^{k}}^{_{k+H-1}} = [\nu_{k}, \dots, \nu_{k+H-1}]$ is a
vector of random variables representing future packet deliveries and
$\vect{\tilde{\nu}}_{^0}^{_{H-1}} \in \{0,1\}^H$.

The SIHS and GEIHS network estimators only differ in the network
models.  The parameters of the network models ---
topology $G$, schedule $\mathbf{F}^{(T)}$, link probabilities
$\{p_l\}_{l \in \Ec}$ or $\{p_l^\mathrm{u},p_l^\mathrm{d}\}_{l \in
\Ec}$, source $a$, sink $b$, packet generation times $t_k$, and
deadlines $t_k'$ --- are known a priori to the network estimators
and are left out of the conditional probability expressions.

In Section~\ref{sec:net_est_pkt_pred}, we will use the probability
distribution on the topology realizations (our network state estimate),
\[
\Pr(G^{_{(k)}} = \tilde{G} | \vect{\nu}_{^0}^{_{k-1}} =
     \vect{\Ve}_{^0}^{_{k-1}}) \quad,
\]
to obtain $f_{\vect{\nu}_{^{k}}^{_{k+H-1}}}$ from
$\vect{\Ve}_{^0}^{_{k-1}}$ and the network model.

\subsection{FPD Controller}
\label{sec:ctrlr_pkt_pred}

The FPD controller ($\Ce$ in Figure~\ref{fig:system_diagram}) optimizes
the control signals to the statistics of the future packet delivery
sequence, derived from the past packet delivery sequence. We choose
the optimal control framework because the cost function allows us to
easily compare the FPD controller with other controllers. The control
policy operates on the information set
\begin{equation} \label{Eq:Ic}
\Ic^{^\Ce}_k = \{ \vect{x}_{^{0}}^{_{k}}, \vect{u}_{^{0}}^{_{k-1}},
                  \vect{\nu}_{^{0}}^{_{k-1}} \} \quad .
\end{equation}
The control policy minimizes the linear quadratic cost function
\begin{equation} \label{Eq:LQGCriterion}
\E\left[ \Sst x_N^T Q_0
     x_N+\sum_{n=0}^{N-1} x_n^T Q_1 x_n + \nu_n u_n^T Q_2 u_n \D \right]
\quad ,
\end{equation}
where $Q_0$, $Q_1$, and $Q_2$ are positive definite weighting matrices
and $N$ is the finite horizon, to get the minimum cost
\[
J = \min_{_{u_0,\dots,u_{N-1}}}\E\left[ \Sst x_N^T Q_0
     x_N+\sum_{n=0}^{N-1} x_n^T Q_1 x_n + \nu_n u_n^T Q_2 u_n \D \right]
\quad .
\]
Section \ref{sec:control_syn} will show that the resulting
architecture separates into a network estimator and a controller which
uses the pdf $f_{\vect{\nu}_{^{k}}^{_{k+H-1}}}$ supplied by the
network estimator ($\hat{\Ne}$ in Figure~\ref{fig:system_diagram}) to
find the control signals $u_k$.

\section{Network Estimation and Packet Delivery Prediction}
\label{sec:net_est_pkt_pred}

We will use recursive Bayesian estimation to estimate the state of the
network, and use the network state estimate to predict future packet
deliveries.  Figure~\ref{fig:net_est_graphical_model} is the graphical
model / hidden Markov model \cite{smyth:1997} describing our
recursive estimation problem.

\begin{figure}
\begin{center}
\includegraphics[width=10cm]{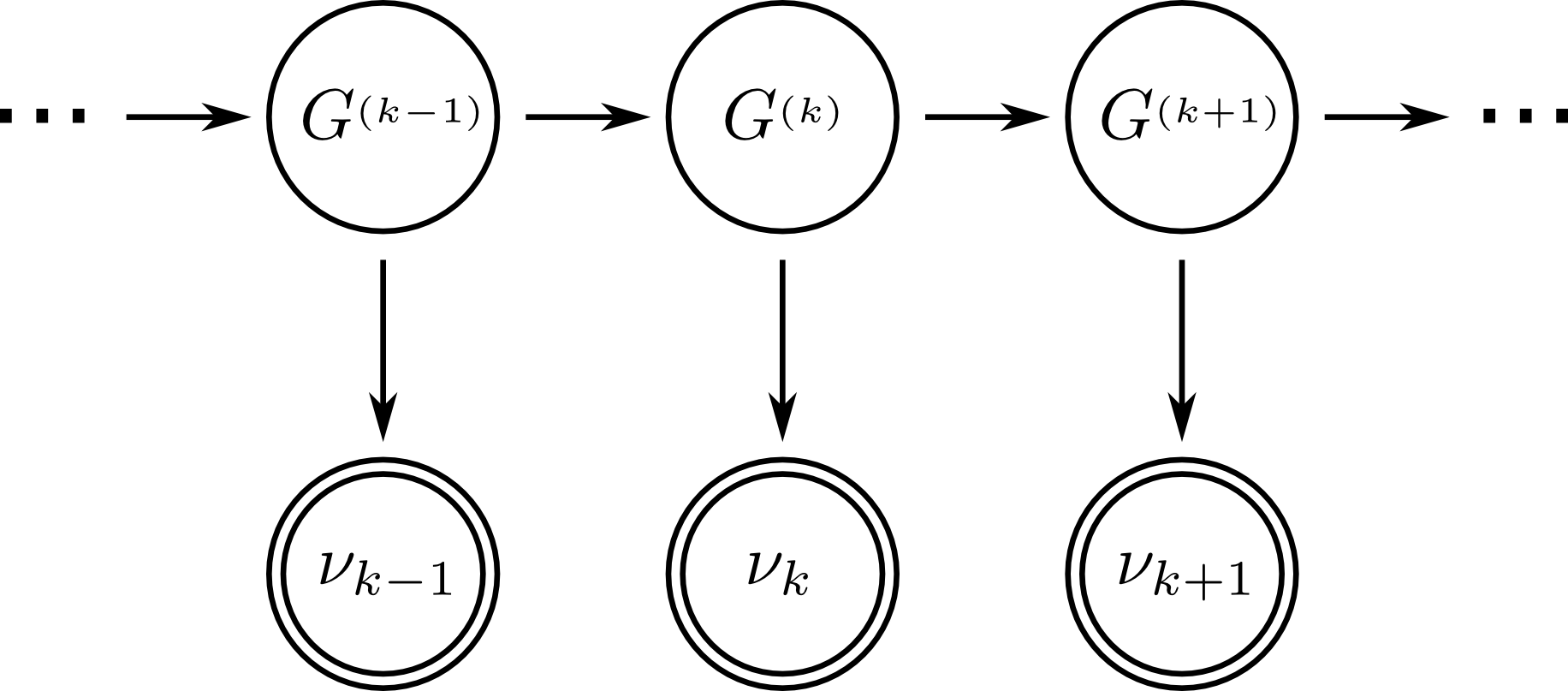}
\caption{Graphical model describing the network estimation problem.
  $\nu_k$ is the measurement output variable at time $k$, and
  $G^{_{(k)}}$ is the hidden state of the network.}
\label{fig:net_est_graphical_model}
\end{center}
\end{figure}

\subsection{SIHS Network Estimator}
\label{sec:sl_net_est_pkt_pred}

The steps in the SIHS network estimator are derived from
\eqref{eq:pkt_pred}.  We introduce new notation for conditional pdfs
(i.e., $\alpha_k,\beta_k,Z_k$), which will be used later to state the
steps in the estimator compactly.\footnote{A semicolon is used in the
conditional pdfs to separate the values being conditioned on from
the remaining arguments.}  First, express
$f_{\vect{\nu}_{^{k}}^{_{k+H-1}}}(\vect{\tilde{\nu}}_{^0}^{_{H-1}})$ as
\[
 \underbrace{\Pr(\vect{\nu}_{^{k}}^{_{k+H-1}} = \vect{\tilde{\nu}}_{^0}^{_{H-1}} |
                 \vect{\nu}_{^0}^{_{k-1}} = \vect{\Ve}_{^0}^{_{k-1}})}_
            {f_{\vect{\nu}_{^{k}}^{_{k+H-1}}}(\vect{\tilde{\nu}}_{^0}^{_{H-1}})} =
  \sum_{\tilde{G}}
   \underbrace{\Pr(\vect{\nu}_{^{k}}^{_{k+H-1}} = \vect{\tilde{\nu}}_{^0}^{_{H-1}} |
                   G^{_{(k-1)}} = \tilde{G})}_
              {\alpha_k(\vect{\tilde{\nu}}_{^0}^{_{H-1}};\tilde{G})} \cdot
   \underbrace{\Pr(G^{_{(k-1)}} = \tilde{G} | \vect{\nu}_{^0}^{_{k-1}} = \vect{\Ve}_{^0}^{_{k-1}})}_
              {\beta_k(\tilde{G})}
\quad ,
\]
where we use the relation
\[
  \Pr(\vect{\nu}_{^{k}}^{_{k+H-1}} = \vect{\tilde{\nu}}_{^0}^{_{H-1}} |
      G^{_{(k-1)}} = \tilde{G},\vect{\nu}_{^0}^{_{k-1}} = \vect{\Ve}_{^0}^{_{k-1}}) =
  \Pr(\vect{\nu}_{^{k}}^{_{k+H-1}} = \vect{\tilde{\nu}}_{^0}^{_{H-1}} |
      G^{_{(k-1)}} = \tilde{G}) \quad .
\]
This relation states that given the state of the network, future
packet deliveries are independent of past packet deliveries.  The
expression $\Pr(\vect{\nu}_{^{k}}^{_{k+H-1}} =
\vect{\tilde{\nu}}_{^0}^{_{H-1}} | G^{_{(k-1)}} = \tilde{G})$
indicates whether the future packet delivery sequence
$\vect{\tilde{\nu}}_{^0}^{_{H-1}}$ is consistent with the graph
realization $\tilde{G}$, meaning
\[
\Pr(\vect{\nu}_{^{k}}^{_{k+H-1}} = \vect{\tilde{\nu}}_{^0}^{_{H-1}} |
G^{_{(k-1)}} = \tilde{G}) = \prod_{h=0}^{H-1} \delta_{k+h}(\tilde{\nu}_h;\tilde{G})
\quad,
\]
where $\prod$ is the \emph{and} operator (sometimes denoted
$\bigwedge$). The network state estimate at sample time $k$ from past packet
deliveries is $\beta_k(\tilde{G})$ and is obtained from the network state
estimate at sample time $k-1$, since
\begin{equation}
\label{eq:sl_net_est_derivation}
\underbrace{\Pr(G^{_{(k-1)}} = \tilde{G} | \vect{\nu}_{^0}^{_{k-1}} = \vect{\Ve}_{^0}^{_{k-1}})}_
           {\beta_k(\tilde{G})} =
\frac{\overbrace{\Pr(\nu_{k-1} = \Ve_{k-1} | G^{_{(k-1)}} = \tilde{G})}^{\delta_{k-1}(\Ve_{k-1};\tilde{G})}
      \cdot
      \overbrace{\Pr(G^{_{(k-1)}} = \tilde{G} | \vect{\nu}_{^0}^{_{k-2}} = \vect{\Ve}_{^0}^{_{k-2}})}^
                {\beta_{k-1}(\tilde{G})}}
     {\underbrace{\Pr(\nu_{k-1} = \Ve_{k-1} | \vect{\nu}_{^0}^{_{k-2}} = \vect{\Ve}_{^0}^{_{k-2}})}_{Z_k}}
     \quad.
\end{equation}
Here, $\Pr(G^{_{(k-1)}} = \tilde{G} | \vect{\nu}_{^0}^{_{k-2}} = \vect{\Ve}_{^0}^{_{k-2}}) =
\Pr(G^{_{(k-2)}} = \tilde{G} | \vect{\nu}_{^0}^{_{k-2}} = 
\vect{\Ve}_{^0}^{_{k-2}}) = \beta_{k-1}(\tilde{G})$ for the static
link model because $G^{_{(k-1)}} = G^{_{(k-2)}} = G^{_{(0)}}$.  Again,
we used the independence of future packet deliveries from past packet
deliveries given the network state,
\[
\Pr(\nu_{k-1} = \Ve_{k-1} | G^{_{(k-1)}} = \tilde{G} ,\vect{\nu}_{^0}^{_{k-2}} = \vect{\Ve}_{^0}^{_{k-2}}) =
\Pr(\nu_{k-1} = \Ve_{k-1} | G^{_{(k-1)}} = \tilde{G}) \quad.
\]
Note that $\Pr(\nu_{k-1} = \Ve_{k-1} | G^{_{(k-1)}} = \tilde{G})$ can
only be 0 or 1, indicating whether the packet delivery is consistent
with the graph realization.  Finally, $\Pr(\nu_{k-1} = \Ve_{k-1} |
\vect{\nu}_{^0}^{_{k-2}} = \vect{\Ve}_{^0}^{_{k-2}})$ is the same for
all $\tilde{G}$, so it is treated as a normalization constant.

At sample time $k=0$, when no packets have been sent through the
network, $\beta_0(\tilde{G}) = \Pr(G^{_{(0)}} = \tilde{G})$, which is
expressed in \eqref{eq:sl_init_net_est} below.  This equation comes
from the assumption that all links in the network are independent.

\begin{subequations}
\label{eqs:sihs_net_est}
To summarize, the \textit{SIHS Network Estimator and Packet Delivery
Predictor} is a recursive Bayesian estimator where the measurement
output step consists of
\begin{align}
\label{eq:sl_pkt_pred_update}
  f_{\vect{\nu}_{^{k}}^{_{k+H-1}}}(\vect{\tilde{\nu}}_{^0}^{_{H-1}}) &=
  \sum_{\tilde{G}} \alpha_k(\vect{\tilde{\nu}}_{^0}^{_{H-1}};\tilde{G}) 
                   \cdot \beta_k(\tilde{G}) \\
\label{eq:sl_pkt_k_pred}
  \alpha_k(\vect{\tilde{\nu}}_{^0}^{_{H-1}};\tilde{G}) &= 
    \prod_{h=0}^{H-1} \delta_{k+h}(\tilde{\nu}_h;\tilde{G}) \quad ,
\end{align}
and the innovation step
consists of
\begin{align}
\label{eq:sl_net_est_update}
  \beta_k(\tilde{G}) &= \frac{\delta_{k-1}(\Ve_{k-1};\tilde{G}) \cdot
    \beta_{k-1}(\tilde{G})}{Z_k}\\
\label{eq:sl_init_net_est}
\beta_0(\tilde{G}) &= \left(\prod_{l \in \tilde{\Ec}} p_l \right)
\left(\prod_{l \in \Ec \backslash \tilde{\Ec}} 1-p_l \right)
\quad ,
\end{align}
where $\alpha_k$ and $\beta_k$ are functions, $Z_k$ is a normalization
constant such that $\sum_{\tilde{G}} \beta_k(\tilde{G}) = 1$, and the
functions $\delta_{k+h}$ and $\delta_{k-1}$ are defined by
\eqref{eq:indicate_pkt_dlvry_match}.
\end{subequations}

\subsection{GEIHS Network Estimator}
\label{sec:ge_net_est_pkt_pred}

For compact notation in the probability expressions below, we use 
$\vect{\Ve}_{^0}^{_{k-1}}$ in place of
$\vect{\nu}_{^0}^{_{k-1}} = \vect{\Ve}_{^0}^{_{k-1}}$ and only write
the random variable and not its value ($\tilde{\cdot}$) .

The derivation of the GEIHS network estimator is similar to the
previous derivation, except that the state of the network evolves with
every sample time $k$.  Since all the links in the network are
independent, the probability that a given topology $\tilde{G}'$ at
sample time $k-1$ transitions to a topology $\tilde{G}$ after one sample
time is given by
\begin{equation}
\label{eq:ge_topo_evolve}
  \Gamma(\tilde{G};\tilde{G}') =
  \Pr(G^{_{(k)}} | G^{_{(k-1)}}) =
   \left(\prod_{l_1 \in \tilde{\Ec}' \cap \tilde{\Ec}}
		  1 - p_{l_1}^\mathrm{d} \right)
    \left(\prod_{l_2 \in \tilde{\Ec}' \backslash \tilde{\Ec}}
    	      p_{l_2}^\mathrm{d} \right)
    \left(\prod_{l_3 \in \tilde{\Ec} \backslash \tilde{\Ec}'}
    	      p_{l_3}^\mathrm{u} \right)	
    \left(\prod_{l_4 \in \Ec \backslash (\tilde{\Ec}' \cup \tilde{\Ec})}
    	      1 - p_{l_4}^\mathrm{u} \right)
\quad .
\end{equation}

First, express
$f_{\vect{\nu}_{^{k}}^{_{k+H-1}}}(\vect{\tilde{\nu}}_{^0}^{_{H-1}})$ as
\[
\underbrace{\Pr(\vect{\nu}_{^{k}}^{_{k+H-1}} | \vect{\Ve}_{^0}^{_{k-1}})}_{
  f_{\vect{\nu}_{^{k}}^{_{k+H-1}}}(\vect{\tilde{\nu}}_{^0}^{_{H-1}})} =
  \sum_{G^{_{(k+H-1)}}} 
  \underbrace{\Pr(\nu_{k+H-1} | G^{_{(k+H-1)}})}_{
              \delta_{k+H-1}(\tilde{\nu}_{H-1};\tilde{G}_{H-1})} \cdot
  \underbrace{\Pr(\vect{\nu}_{^{k}}^{_{k+H-2}},G^{_{(k+H-1)}} |
              \vect{\Ve}_{^0}^{_{k-1}})}_{
              \alpha_{H-1|k-1}(\vect{\tilde{\nu}}_{^0}^{_{H-2}},\tilde{G}_{H-1}) }
\quad ,
\]
where for $h=2,\dots,H-1$
\begin{equation}
\label{eq:ge_pkt_k_pred_derivation}
  \underbrace{\Pr(\vect{\nu}_{^{k}}^{_{k+h-1}},G^{_{(k+h)}} |
  \vect{\Ve}_{^0}^{_{k-1}})}_{
  \alpha_{h|k-1}(\vect{\tilde{\nu}}_{^0}^{_{h-1}},\tilde{G}_h)} =
      \sum_{G^{_{(k+h-1)}}}
      \underbrace{\Pr(G^{_{(k+h)}} | G^{_{(k+h-1)}})}_{
                      \Gamma(\tilde{G}_h;\tilde{G}_{h-1})} \cdot
      \underbrace{\Pr(\nu_{k+h-1} | G^{_{(k+h-1)}})}_{
                  \delta_{k+h-1}(\tilde{\nu}_{h-1};\tilde{G}_{h-1})} \cdot
      \underbrace{\Pr(\vect{\nu}_{^{k}}^{_{k+h-2}},G^{_{(k+h-1)}} |
                  \vect{\Ve}_{^0}^{_{k-1}})}_{
                  \alpha_{h-1|k-1}(\vect{\tilde{\nu}}_{^0}^{_{h-2}},\tilde{G}_{h-1})}
      \quad .
\end{equation}
When $h=1$, replace $\Pr(\vect{\nu}_{^{k}}^{_{k+h-2}},G^{_{(k+h-1)}}
| \vect{\Ve}_{^0}^{_{k-1}})$ in \eqref{eq:ge_pkt_k_pred_derivation} with
$\Pr(G^{_{(k)}} | \vect{\Ve}_{^0}^{_{k-1}}) =
\beta_{k|k-1}(\tilde{G})$.  The value $\beta_{k|k-1}(\tilde{G})$ comes
from
\[
  \underbrace{\Pr(G^{_{(k)}} | \vect{\Ve}_{^0}^{_{k-1}})}_{
              \beta_{k|k-1}(\tilde{G})} =
  \sum_{G^{_{(k-1)}}}
  \underbrace{\Pr(G^{_{(k)}} | G^{_{(k-1)}})}_{
              \Gamma(\tilde{G};\tilde{G}')} \cdot
  \underbrace{\Pr(G^{_{(k-1)}} | \vect{\Ve}_{^0}^{_{k-1}})}_{
              \beta_{k-1|k-1}(\tilde{G}')} \quad .
\]
The value $\beta_{k-1|k-1}(\tilde{G}')$ comes from
\eqref{eq:sl_net_est_derivation}, with $\beta_{k}$ replaced by
$\beta_{k-1|k-1}$ and $\beta_{k-1}$ replaced by $\beta_{k-1|k-2}$.  Finally,
$\beta_{0|-1}(\tilde{G}) = \Pr(G^{_{(0)}})$, where all links are
independent and have link probabilities equal to their steady-state
probability of being in state 1, and is expressed in
\eqref{eq:ge_init_net_est} below.

\begin{subequations}
\label{eqs:geihs_net_est}
To summarize, the \textit{GEIHS Network Estimator and Packet Delivery
Predictor} is a recursive Bayesian estimator.  The measurement
output step consists of
\begin{equation}
\label{eq:ge_pkt_pred_update}
  f_{\vect{\nu}_{^{k}}^{_{k+H-1}}}(\vect{\tilde{\nu}}_{^0}^{_{H-1}}) =
  \sum_{\tilde{G}_{H-1}} \delta_{k+H-1}(\tilde{\nu}_{H-1};\tilde{G}_{H-1}) \cdot
  \alpha_{H-1|k-1}(\vect{\tilde{\nu}}_{^0}^{_{H-2}},\tilde{G}_{H-1})
  \quad ,
\end{equation}
where the function $\alpha_{H-1|k-1}$ is obtained from the following
recursive equation for $h=2,\ldots,H-1$:
\begin{equation}
\label{eq:ge_pkt_k_pred}
  \alpha_{h|k-1}(\vect{\tilde{\nu}}_{^0}^{_{h-1}},\tilde{G}_h) =
    \sum_{\tilde{G}_{h-1}} \Gamma(\tilde{G}_h;\tilde{G}_{h-1}) \cdot
     \delta_{k+h-1}(\tilde{\nu}_{h-1};\tilde{G}_{h-1}) \cdot
     \alpha_{h-1|k-1}(\vect{\tilde{\nu}}_{^0}^{_{h-2}},\tilde{G}_{h-1})
\quad ,
\end{equation}
with initial condition
\begin{equation}
\label{eq:ge_init_pkt_k_pred}
  \alpha_{1|k-1}(\vect{\tilde{\nu}}_{^0}^{_0},\tilde{G}_1) =
    \sum_{\tilde{G}} \Gamma(\tilde{G}_1;\tilde{G}) \cdot
    \delta_k(\tilde{\nu}_0;\tilde{G}) \cdot
    \beta_{k|k-1}(\tilde{G}) \quad .
\end{equation}
The prediction and innovation steps consist of
\begin{align}
\label{eq:ge_net_est_pred}
  \beta_{k|k-1}(\tilde{G}) &= \sum_{\tilde{G}'} \Gamma(\tilde{G};\tilde{G}') \cdot
	\beta_{k-1|k-1}(\tilde{G}')\\
\label{eq:ge_net_est_innov}
  \beta_{k-1|k-1}(\tilde{G}) &= \frac{\delta_{k-1}(\Ve_{k-1};\tilde{G}) \cdot
    \beta_{k-1|k-2}(\tilde{G})}{Z_{k-1}}\\
\label{eq:ge_init_net_est}
 \beta_{0|-1}(\tilde{G}) &= \left(\prod_{l \in \tilde{\Ec}} p_l \right)
 \left(\prod_{l \in \Ec \backslash \tilde{\Ec}} 1-p_l \right)
 \quad ,
\end{align}
where $\alpha_{h|k-1}$, $\beta_{k-1|k-1}$, and $\beta_{k|k-1}$ are
functions, $Z_{k-1}$ is a normalization constant such that
$\sum_{\tilde{G}} \beta_{k-1|k-1}(\tilde{G}) = 1$, and the functions
$\delta_\kappa$ (for the different values of $\kappa$ above) and
$\Gamma$ are defined by \eqref{eq:indicate_pkt_dlvry_match} and
\eqref{eq:ge_topo_evolve}, respectively.
\end{subequations}

\subsection{Packet Predictor Complexity}
\label{sec:net_est_comp_complex}

The network estimators are trying to estimate network parameters using
measurements collected at the border of the network, a general problem
studied in the field of network tomography \cite{castro:2004} under
various problem setups.  One of the greatest challenges in network
tomography is getting good estimates with low computational complexity
estimators.

Our proposed network estimators are ``optimal'' with respect to our
models in the sense that there is no loss of information, but they are
computationally expensive.

\begin{property}
\label{prop:net_est_comp_complex}
The SIHS network estimator described by the set of equations
\eqref{eqs:sihs_net_est} takes $O(E 2^E)$ to initialize and $O(2^H
2^E)$ to update the network state estimate and predictions at each step.
The GEIHS network estimator described by the set of equations
\eqref{eqs:geihs_net_est} takes $O(E 2^{2E})$ to initialize and $O(2^H
2^{2E})$ for each update step.
\end{property}

\begin{proof}
Let $D = \max_k t_k'-t_k$.  We assume that converting $\tilde{G}$ to
the set of links that are up, $\tilde{\Ec}$, takes constant time.
Also, one can simulate the path of a packet by looking up the
scheduled and successful link transmissions instead of multiplying
matrices to evaluate
$\tilde{F}_{ab}^{(t_\kappa,t_\kappa';\tilde{G})}$, so computing
$\delta_\kappa$ for each graph $\tilde{G}$ only takes $O(D)$.  The
computational complexities below assume that the pdfs can be
represented by matrices, and multiplying an $\ell \times m$ matrix
with a $m \times n$ matrix takes $O(\ell mn)$.

\noindent\textit{SIHS packet delivery predictor complexity:}

Computing $\alpha_k$ in \eqref{eq:sl_pkt_k_pred} takes $O(D 2^H 2^E)$
and computing $\beta_k$ in \eqref{eq:sl_net_est_update} takes $O(D
2^E)$, since there are $2^E$ graphs and $2^H$ packet delivery
prediction sequences.  Computing $f_{\vect{\nu}_{^k}^{_{k+H-1}}}$ in
\eqref{eq:sl_pkt_pred_update} takes $O(D 2^H 2^E)$.   The SIHS packet
delivery predictor update step is the aggregate of all these
computations and takes $O(D 2^H 2^E)$.

The initialization step of the SIHS packet delivery predictor is just
computing $\beta_0$ in \eqref{eq:sl_init_net_est}, which takes $O(E
2^E)$.

\noindent\textit{GEIHS packet delivery predictor complexity:}

Computing $\alpha_{h|k-1}$ in \eqref{eq:ge_pkt_k_pred} takes $O(D2^E +
2^E2^h + 2^{2E} 2^h)$ and computing $\alpha_{1|k-1}$ in
\eqref{eq:ge_init_pkt_k_pred} takes $O(D2^E + 2^{E+1}+2^{2E+1})$, so
computing all of them takes $O(DH2^E +2^H(2^E+2^{2E}))$, or just
$O(DH2^E + 2^H2^{2E})$.  Computing $\beta_{k|k-1}$ in
\eqref{eq:ge_net_est_pred} takes $O(2^{2E})$, and computing
$\beta_{k-1|k-1}$ in \eqref{eq:ge_net_est_innov} takes $O(D 2^E)$.
Computing $f_{\vect{\nu}_{^k}^{_{k+H-1}}}$ in
\eqref{eq:ge_pkt_pred_update} takes $O(D 2^E + 2^H 2^E)$.  The GEIHS
packet delivery predictor update step is the aggregate of all these
computations and takes $O(DH2^E + 2^H2^{2E})$.

Computing $\Gamma$ in \eqref{eq:ge_topo_evolve} takes $O(E 2^{2E})$,
and computing $\beta_{0|-1}$ in \eqref{eq:ge_init_net_est} takes $O(E
2^E)$.  The initialization step of the GEIHS packet delivery predictor
is the aggregate of these computations and takes $O(E 2^{2E})$.

If we assume that the deadline $D$ is short enough to be considered
constant, we get the computational complexities given in
Property~\ref{prop:net_est_comp_complex}.
\end{proof}

A good direction for future research is to find lower complexity,
``suboptimal'' network estimators for our problem setup, and compare them
to our ``optimal'' network estimators.

\subsection{Discussion}
\label{sec:net_est_discuss}

Our network estimators can easily be extended to incorporate
additional observations besides past packet deliveries, such as the
packet delay and packet path traces.  The latter can be obtained by
recording the state of the links that the packet has tried to traverse
in the packet payload.  The function $\delta_{k-1}$ in
\eqref{eq:sl_net_est_update} and \eqref{eq:ge_net_est_innov} just
needs to be replaced with another function that returns 1 if the the
received observation is consistent with a network topology $\tilde{G}$,
and 0 otherwise.  The advantage of using more observations than the
one bit of information provided by a packet delivery is that it will
help the GEIHS network estimator more quickly detect changes in the
network state.  A more non-trivial extension of the GEIHS network
estimator would use additional observations provided by packets from
other flows (not from our controller) to help estimate the network
state, which could significantly decrease the time for the network
estimator to detect a change in the state of the network.  This is
non-trivial because the network model would now have to account for
queuing at nodes in the network, which is inevitable with multiple
flows.

Note that the network state probability distribution,
$\beta_k(\tilde{G})$ in \eqref{eq:sl_net_est_update} or
$\beta_{k-1|k-1}(\tilde{G})$ in \eqref{eq:ge_net_est_innov}, does not
need to converge to a probability distribution describing one topology
realization to yield precise packet predictions
$f_{\vect{\nu}_{^{k}}^{_{k+H-1}}}(\vect{\tilde{\nu}}_{^0}^{_{H-1}})$,
where precise means there is one (or very few similar) high
probability packet delivery sequence(s)
$\vect{\tilde{\nu}}_{^0}^{_{H-1}}$.  Several topology realizations
$\tilde{G}$ may result in the same packet delivery sequence.

Also, note that the GEIHS network estimator performs better when the links
in the network are more bursty.  Long bursts of packet losses from
bursty links result in poor control system performance, which is when
the network estimator would help the most.

\section{FPD Controller}
\label{sec:control_syn}

In this section, we derive the FPD controller using dynamic
programming.  Next, we present two controllers for comparison with the
FPD controller.  These comparative controllers assume particular
statistical models (e.g., i.i.d. Bernoulli) for the packet delivery
sequence pdf which may not describe the actual pdf, while the FPD
controller allows for all packet delivery sequence pdfs.  We derive
the LQG cost of using these controllers. Finally, we present the
computational complexity of the optimal controller.

\subsection{Derivation of the FPD Controller}
We first present the FPD controller and then present its derivation.

\begin{theorem} \label{Thm:fpd}
For a plant with state dynamics given by \eqref{Eq:StateSpace}, the
optimal control policy operating on the information set \eqref{Eq:Ic}
which minimizes the cost function \eqref{Eq:LQGCriterion} results in
an optimal control signal $u_k = -L_k x_k$, where
\begin{equation}
L_k =\big(Q_2 + B^T S_{k+1}\Sst(\nu_k=1,\vect{\nu}_{^0}^{_{k-1}})\D B\big)^{-1}
          B^T S_{k+1}\Sst(\nu_k=1,\vect{\nu}_{^0}^{_{k-1}})\D A \label{Eq:OptL}
\end{equation}
and $S_k : \{0,1\}^k \mapsto \Sbb_+^\ell$ and $s_k : \{0,1\}^k \mapsto \R_+$ 
are the solutions to the cost-to-go at time $k$, given by
\begin{align*}
S_k\Sst(\vect{\nu}_{^{0}}^{_{k-1}})\D ={}& Q_1 + A^T \E\big[S_{k+1}\Sst(\vect{\nu}_{^{0}}^{_{k}})\D|
  \vect{\nu}_{^0}^{_{k-1}}\big]A \\
  {}& - \Pr(\nu_k=1|\vect{\nu}_{^0}^{_{k-1}}) \bigg[A^T S_{k+1}\Sst(\nu_k=1,\vect{\nu}_{^0}^{_{k-1}})\D B
      \big(Q_2 + B^T S_{k+1}\Sst(\nu_k=1,\vect{\nu}_{^0}^{_{k-1}})\D B\big)^{-1} B^T
      S_{k+1}\Sst(\nu_k=1,\vect{\nu}_{^0}^{_{k-1}})\D A\bigg] \\
s_k\Sst(\vect{\nu}_{^{0}}^{_{k-1}})\D ={}& \tr\left\{ \E\big[ S_{k+1}\Sst(\vect{\nu}_{^{0}}^{_{k}})\D|
\vect{\nu}_{^0}^{_{k-1}}\big] R_w\right\} + \E\big[s_{k+1}\Sst(\vect{\nu}_{^{0}}^{_{k}})\D|
\vect{\nu}_{^0}^{_{k-1}}\big] \quad .
\end{align*}
\end{theorem}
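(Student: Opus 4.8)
The plan is to solve the problem by backward dynamic programming. Because the sensing channel is perfect, the state $x_k$ is known exactly when $u_k$ is chosen, so no state estimation is required and the optimal cost-to-go at time $k$ is a deterministic function of $x_k$ and the realized delivery history $\vect{\nu}_{^0}^{_{k-1}}$ (the latter being the only quantity in $\Ic^{^\Ce}_k$ that fixes the conditional law of the future $\nu_n$'s). I would posit the quadratic ansatz
\[
V_k(x_k) = x_k^T S_k(\vect{\nu}_{^0}^{_{k-1}}) x_k + s_k(\vect{\nu}_{^0}^{_{k-1}}),
\]
with terminal data $S_N = Q_0$ and $s_N = 0$ read directly from \eqref{Eq:LQGCriterion}, and prove by downward induction on $k$ that $V_k$ has exactly this form, with $S_k$ and $s_k$ satisfying the stated recursions and the minimizing control given by \eqref{Eq:OptL}.

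The inductive step starts from the Bellman equation
\[
V_k(x_k) = x_k^T Q_1 x_k + \min_{u_k} \E\big[ \nu_k u_k^T Q_2 u_k + V_{k+1}(x_{k+1}) \mid \Ic^{^\Ce}_k \big],
\]
into which I substitute the dynamics \eqref{Eq:StateSpace} and the inductive form of $V_{k+1}$. Since $u_k$ is chosen \emph{before} $\nu_k$ is observed, the conditional expectation factorizes into an average over $w_k$, which is zero-mean and independent and hence contributes only the trace term $\tr\{\E[S_{k+1}\mid\vect{\nu}_{^0}^{_{k-1}}] R_w\}$, and an average over the binary $\nu_k$ weighted by $p_k := \Pr(\nu_k=1 \mid \vect{\nu}_{^0}^{_{k-1}})$, noting that $S_{k+1}$ and $s_{k+1}$ are functions of the \emph{extended} history $\vect{\nu}_{^0}^{_{k}}$ and so split into their $\nu_k=1$ and $\nu_k=0$ branches. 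The crucial structural observation is that $u_k$ enters both the stage cost and the dynamics only through the product $\nu_k B u_k$, so every $u_k$-dependent term carries a factor $p_k$ and involves only the branch $S_{k+1}(\nu_k=1,\vect{\nu}_{^0}^{_{k-1}})$.

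Minimizing that branch by completing the square in $u_k$ is the routine quadratic step: setting the gradient to zero gives $u_k = -L_k x_k$ with $L_k$ exactly as in \eqref{Eq:OptL}, and since $Q_2 \succ 0$ and $S_{k+1} \succeq 0$ the Hessian $Q_2 + B^T S_{k+1}(\nu_k=1,\vect{\nu}_{^0}^{_{k-1}}) B$ is positive definite, so the stationary point is the unique minimizer. Substituting $u_k = -L_k x_k$ back and collecting the quadratic-in-$x_k$ terms — using $p_k S_{k+1}(\nu_k=1,\vect{\nu}_{^0}^{_{k-1}}) + (1-p_k) S_{k+1}(\nu_k=0,\vect{\nu}_{^0}^{_{k-1}}) = \E[S_{k+1}\mid\vect{\nu}_{^0}^{_{k-1}}]$ to merge the two branches of the drift term — reproduces the recursion for $S_k(\vect{\nu}_{^0}^{_{k-1}})$, while the state-independent remainder reproduces the recursion for $s_k(\vect{\nu}_{^0}^{_{k-1}})$. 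This shows $V_k$ again has the quadratic form and closes the induction.

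I expect the main obstacle to be the careful bookkeeping of the conditional expectation over the correlated $\nu_k$: one must keep track that $S_{k+1}$ and $s_{k+1}$ depend on the full extended history and that their conditional expectation given $\vect{\nu}_{^0}^{_{k-1}}$ is precisely where the non-i.i.d. delivery structure — and hence the packet-delivery prediction $\Pr(\nu_k=1\mid\vect{\nu}_{^0}^{_{k-1}})$ supplied by the network estimator — enters the controller. A secondary point to handle cleanly is the degenerate case $p_k = 0$, in which both the stage control cost and the control's effect on the state vanish, so $u_k$ is immaterial and $L_k$ may be assigned by the same formula, which remains well-defined because $Q_2 \succ 0$.
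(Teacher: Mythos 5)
Your proposal is correct and follows essentially the same route as the paper's proof: backward dynamic programming with the quadratic ansatz $V_k = x_k^T S_k(\vect{\nu}_{^0}^{_{k-1}}) x_k + s_k(\vect{\nu}_{^0}^{_{k-1}})$, induction from $S_N = Q_0$, $s_N = 0$, the key observation that every $u_k$-dependent term carries the factor $\nu_k$ (so only the $\nu_k = 1$ branch of $S_{k+1}$ and the prediction $\Pr(\nu_k = 1 \mid \vect{\nu}_{^0}^{_{k-1}})$ enter), and completion of the square to obtain \eqref{Eq:OptL} and the stated recursions. Your additional remarks on the positive-definite Hessian and the degenerate case $\Pr(\nu_k=1\mid\vect{\nu}_{^0}^{_{k-1}})=0$ are consistent refinements of, not departures from, the paper's argument.
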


\begin{proof}
The classical problem in \AA str\"om \cite{Astrom:2006} is solved by reformulating
the original problem as a recursive minimization of the Bellman
equation derived for every time instant, beginning with $N$. At time
$n$, we have the minimization problem
\begin{align*}
& \quad \min_{_{u_n,\ldots,u_{N-1}}} \; \E[ x_N^T Q_0 x_N + \sum_{i=n}^{N-1} x_i^T Q_1 x_i + \nu_i u_i^T Q_2 u_i ] \\
={}& \E\left[\min_{_{u_n,\ldots,u_{N-1}}} \E[ x_N^T Q_0 x_N + \sum_{i=n}^{N-1} x_i^T Q_1 x_i +
\nu_i u_i^T Q_2 u_i | \Ic^{^\Ce}_n]\right] \\
={}& \E\left[ \quad \; \min_{_{u_n}} \quad \E[ x_n^T Q_1 x_n + \nu_n u_n^T Q_2 u_n + V_{n+1} | \Ic^{^\Ce}_n]\right] \quad ,
\end{align*}
where $V_n$ is the Bellman equation at time $n$. This is given by
\begin{equation*}
V_n = \min_{_{u_n}} \; \E\left[ x_n^T Q_1 x_n + \nu_n u_n^T Q_2 u_n + V_{n+1} | \Ic^{^\Ce}_n\right] \quad .
\end{equation*}

To solve the above nested minimization problem, we assume that the
solution to the functional is of the form $V_n = x_n^T
S_n\Sst(\vect{\nu}_{^0}^{_{n-1}})\D x_n +
s_n\Sst(\vect{\nu}_{^0}^{_{n-1}})\D$, where $S_n$ and $s_n$ are
functions of the past packet deliveries $\vect{\nu}_{^0}^{_{n-1}}$
that return a positive semidefinite matrix and a scalar,
respectively. However, both $S_n$ and $s_n$ are not functions of the
applied control sequence $\vect{u}_{^0}^{_{n-1}}$. We prove this
supposition using induction. The initial condition at time $N$ is
trivially obtained as $V_N = x_N^T Q_0 x_N$, with $S_N = Q_0$ and
$s_N=0$. We now assume that the functional at time $n+1$ has a
solution of the desired form, and attempt to derive this at time
$n$. We have
\begin{align}
V_n ={}& \min_{_{u_n}} \; \E\left[x_n^T Q_1 x_n + \nu_n u_n^T Q_2 u_n +
x_{n+1}^T S_{n+1}\Sst(\vect{\nu}_{^{0}}^{_{n}})\D x_{n+1} + s_{n+1}\Sst(\vect{\nu}_{^{0}}^{_{n}})\D | \Ic^{^\Ce}_n\right] \notag \\
={}& \min_{_{u_n}} \; \E \Big[ x_n^T \big(Q_1 + A^T S_{n+1}\Sst(\vect{\nu}_{^{0}}^{_{n}})\D A \big) x_n +
\nu_n u_n^T \big(Q_2 + B^T S_{n+1}\Sst(\vect{\nu}_{^{0}}^{_{n}})\D B\big) u_n \notag \\
&\quad \quad + \nu_n x_n^T A^T S_{n+1}\Sst(\vect{\nu}_{^{0}}^{_{n}})\D B u_n +
\nu_n u_n^T B^T S_{n+1}\Sst(\vect{\nu}_{^{0}}^{_{n}})\D A x_n + w_n^T S_{n+1}\Sst(\vect{\nu}_{^{0}}^{_{n}})\D w_n + s_{n+1}\Sst(\vect{\nu}_{^{0}}^{_{n}})\D  | \Ic^{^\Ce}_n \Big] \notag \\
={}& \min_{_{u_n}} \; x_n^T \bigg(Q_1 + A^T \E \big[ S_{n+1}\Sst(\vect{\nu}_{^{0}}^{_{n}})\D | \vect{\nu}_{^0}^{_{n-1}} \big] A\bigg) x_n + \tr\bigg\{ \E\big[ S_{n+1}\Sst(\vect{\nu}_{^{0}}^{_{n}})\D|\vect{\nu}_{^0}^{_{n-1}}\big] R_w\bigg\} +
\E[s_{n+1}\Sst(\vect{\nu}_{^{0}}^{_{n}})\D|\vect{\nu}_{^0}^{_{n-1}}] \notag \\
&\quad \quad + \Pr(\nu_n=1|\vect{\nu}_{^0}^{_{n-1}}) \bigg[u_n^T \big(Q_2 + B^T S_{n+1}\Sst(\nu_n=1,\vect{\nu}_{^0}^{_{n-1}})\D
B\big) u_n \notag \\
& \quad \quad + x_n^T A^T S_{n+1}\Sst(\nu_n=1,\vect{\nu}_{^0}^{_{n-1}})\D B u_n + u_n^T B^T
S_{n+1}\Sst(\nu_n=1,\vect{\nu}_{^0}^{_{n-1}})\D A x_n\bigg] \quad \label{Eq:V_n}.
\end{align}
In the last equation above, the expectation of the terms preceded by
$\nu_n$ require the conditional probability
$\Pr(\nu_n=1|\vect{\nu}_{^0}^{_{n-1}})$ and an evaluation of $S_{n+1}$
with $\nu_n=1$. The corresponding terms with $\nu_n=0$ vanish as they
are multiplied by $\nu_n$. The control input at sample time $n$ which
minimizes the above expression is found to be $u_n = -L_n x_n$, where
the optimal control gain $L_n$ is given by \eqref{Eq:OptL}, with $k$
replaced by $n$. Substituting for $u_n$ in the functional $V_n$, we
get a solution to the functional of the desired form, with $S_n$ and
$s_n$ given by
\begin{subequations}
\begin{align}
S_n\Sst(\vect{\nu}_{^{0}}^{_{n-1}})\D ={}& Q_1 + A^T \E\big[S_{n+1}\Sst(\vect{\nu}_{^{0}}^{_{n}})\D|
  \vect{\nu}_{^0}^{_{n-1}}\big]A \notag \\
  {}& - \Pr(\nu_n=1|\vect{\nu}_{^0}^{_{n-1}}) \bigg[A^T S_{n+1}\Sst(\nu_n=1,\vect{\nu}_{^0}^{_{n-1}})\D B
      \big(Q_2 + B^T S_{n+1}\Sst(\nu_n=1,\vect{\nu}_{^0}^{_{n-1}})\D B\big)^{-1} B^T
      S_{n+1}\Sst(\nu_n=1,\vect{\nu}_{^0}^{_{n-1}})\D A\bigg] \label{Eq:S_n}\\
s_n\Sst(\vect{\nu}_{^{0}}^{_{n-1}})\D ={}& \tr\left\{ \E\big[ S_{n+1}\Sst(\vect{\nu}_{^{0}}^{_{n}})\D|
\vect{\nu}_{^0}^{_{n-1}}\big] R_w\right\} + \E\big[s_{n+1}\Sst(\vect{\nu}_{^{0}}^{_{n}})\D|
\vect{\nu}_{^0}^{_{n-1}}\big] \quad .
\end{align}

Notice that $S_n$ and $s_n$ are functions of the variables
$\vect{\nu}_{^0}^{_{n-1}}$. When $n = k$, the current sample time,
these variables are known, and $S_n$ and $s_n$ are not random. But
$S_{n+i}$ and $s_{n+i}$, for values of $i\in\{1,\ldots,N-n-1\}$, are
functions of the variables $\vect{\nu}_{^{0}}^{_{n+i-1}}$, of which
only the variables $\vect{\nu}_{^{n}}^{_{n+i-1}}$ are random variables
since they are unknown to the controller at sample time $n=k$. Since
the value of $S_{n+1}$ is required at sample time $n$, we compute its
conditional expectation as
\begin{equation}
  \E\big[S_{n+1}\Sst(\vect{\nu}_{^{0}}^{_{n}})\D | \vect{\nu}_{^0}^{_{n-1}}\big]
  = \Pr(\nu_n=1|\vect{\nu}_{^0}^{_{n-1}}) S_{n+1}\Sst(\nu_n=1,\vect{\nu}_{^{0}}^{_{n-1}})\D + \Pr(\nu_n=0|\vect{\nu}_{^0}^{_{n-1}}) S_{n+1}\Sst(\nu_n=0,\vect{\nu}_{^{0}}^{_{n-1}})\D \quad . \label{Eq:ExpS}
\end{equation}
The above computation requires an evaluation of
$S_{n+i}\Sst(\vect{\nu}_{^{0}}^{_{n+i-1}})\D$ through a backward
recursion for $i\in\{1,\ldots,N-n-1\}$ for all combinations of
$\vect{\nu}_{^{n+i}}^{_{N-2}}$. More explicitly, the expression at any
time $n+i$, for $i\in\{N-n-1,\ldots,1\}$, is given by
\begin{align*}
\E\big[S_{n+i}\Sst(\vect{\nu}_{^{0}}^{_{n+i-1}})\D|\vect{\nu}_{^0}^{_{n-1}}\big] ={}& Q_1 + A^T
\E\big[S_{n+i+1}\Sst(\vect{\nu}_{^{0}}^{_{n+i}})\D | \vect{\nu}_{^0}^{_{n-1}}\big] A \\
{}& - \sum_{\vect{\tilde{\nu}}_{^0}^{_{i-1}} \in \{0,1\}^i} \Pr\big(\nu_{n+i}=1,\vect{\nu}_{^n}^{_{n+i-1}} =
\vect{\tilde{\nu}}_{^0}^{i-1} | \vect{\nu}_{^0}^{_{n-1}}\big) \\
& \times A^T S_{n+i+1}\Sst(\nu_{n+i}=1,\vect{\nu}_{^n}^{_{n+i-1}} = \vect{\tilde{\nu}}_{^0}^{i-1},
\vect{\nu}_{^0}^{_{n-1}})\D B \\
& \times \big(Q_2 + B^T S_{n+i+1}\Sst(\nu_{n+i}=1,\vect{\nu}_{^n}^{_{n+i-1}} = \vect{\tilde{\nu}}_{^0}^{i-1},
\vect{\nu}_{^0}^{_{n-1}})\D B\big)^{-1} \\
& \times B^T S_{n+i+1}\Sst(\nu_{n+i}=1,\vect{\nu}_{^n}^{_{n+i-1}} = \vect{\tilde{\nu}}_{^0}^{i-1},
\vect{\nu}_{^0}^{_{n-1}})\D A \\
\E\big[s_{n+i}\Sst(\vect{\nu}_{^{0}}^{_{n+i-1}})\D|\vect{\nu}_{^0}^{_{n-1}}\big] &=
\tr\left\{\E\big[S_{n+i+1}\Sst(\vect{\nu}_{^{0}}^{_{n+i}})\D |\vect{\nu}_{^0}^{_{n-1}}\big]R_w\right\} +
\E\big[s_{n+i+1}\Sst(\vect{\nu}_{^{0}}^{_{n+i}})\D|\vect{\nu}_{^0}^{_{n-1}}\big] \quad .
\end{align*}
\end{subequations}

Using the above expressions, we obtain the net cost to be
\begin{equation}
  J = \tr{S_0 R_0} + \sum_{n=0}^{N-1} \tr\left\{
      \E[S_{n+1}\Sst(\vect{\nu}_{^{0}}^{_{n}})\D] R_w\right\} \quad . \label{Eq:LQGCriterion2}
\end{equation}
Notice that the control inputs $u_n$ are only applied to the plant and
do not influence the network or $\vect{\nu}_{^0}^{_{N-1}}$. Thus, the
architecture separates into a network estimator and controller, as
shown in Figure~\ref{fig:system_diagram}.
\end{proof}

\subsection{Comparative controllers}
In this section, we compare the performance of the FPD controller to
two controllers that assume particular statistical models for the
packet delivery sequence pdf, the IID controller and the ON
controller.

\noindent \emph{IID Controller:} The IID controller was described in Schenato et al.
\cite{schenato-procIEEE:2007} and assumes that the packet deliveries
are i.i.d. Bernoulli with packet delivery probability equal to the a 
priori probability of delivering a packet through the
network.\footnote{Using the stationary probability of each link under
the G-E link model to calculate the end-to-end probability of 
delivering a packet through the network.} This is our
first comparative controller, where $u_k = - L_k^\IID x_k$ and the
control gain $L_k^\IID$ is given by
\begin{equation*}
L_k^\IID = \big(Q_2 + B^T S_{k+1}^\IID B\big)^{-1} B^T S_{k+1}^\IID A \quad .
\end{equation*}
Here, $S_{k+1}^\IID$ is the solution to the Riccati equation for the
control problem where the packet deliveries are assumed to be
i.i.d. Bernoulli. The backward recursion is initialized to $S_N^\IID = Q_0$ and
is given by
\begin{equation*}
S_k^\IID = Q_1 + A^T S_{k+1}^\IID A - \Pr(\nu_{k}=1) A^T S_{k+1}^\IID B \big(Q_2 + B^T S_{k+1}^\IID B\big)^{-1} B^T S_{k+1}^\IID A \quad .
\end{equation*}

\noindent \emph{ON Controller:} The ON controller assumes that the
packets are always delivered, or that the network is always
online. This is our second comparative controller, where $u_k = -
L_k^\ON x_k$ and the control gain $L_k^\ON$ is given by
\begin{equation*}
L_k^\ON = \big(Q_2 + B^T S_{k+1}^\ON B\big)^{-1} B^T S_{k+1}^\ON A \quad .
\end{equation*}
Here, $S_{k+1}^\ON$ is the solution to the Riccati equation for the
classical control problem which assumes no packet losses on the
actuation channel. The backward recursion is initialized to $S_N^\ON
= Q_0$ and is given by
\begin{equation*}
S_k^\ON = Q_1 + A^T S_{k+1}^\ON A - A^T S_{k+1}^\ON B \big(Q_2 + B^T S_{k+1}^\ON B\big)^{-1} B^T S_{k+1}^\ON A \quad .
\end{equation*}

\noindent \emph{Comparative Cost:} The FPD controller is the most
general form of a causal, optimal LQG controller that takes into
account the packet delivery sequence pdf.  It does not assume the
packet delivery sequence pdf comes from a particular statistical
model.  Approximating the actual packet delivery sequence pdf with a pdf described
by a particular statistical model, and then computing the optimal control
policy, will result in a suboptimal controller.  However, it may be less
computationally expensive to obtain the control gains for such a
suboptimal controller.  For example, the IID controller and the ON
controller are suboptimal controllers for networks like the one
described in Section~\ref{sec:sys_net_model}, since they presume a
statistical model that is mismatched to the packet delivery sequence
pdf obtained from the network model.

\begin{subequations}
\begin{remark} \label{rem:Cost_sopt}
The average LQG cost of using a controller with control gain
$L_n^{\comp}$ is
\begin{equation} \label{Eq:Cost_sopt}
J = \tr{S_0^\sopt R_0} + \sum_{n=0}^{N-1} \tr\left\{ \E[ S_{n+1}^\sopt \Sst(\vect{\nu}_{^{0}}^{_{n}})\D] R_w\right\} \quad ,
\end{equation}
where
\begin{align}
S_n^\sopt \Sst(\vect{\nu}_{^{0}}^{_{n-1}})\D ={}& Q_1 + A^T \E\big[S_{n+1}^\sopt \Sst(\vect{\nu}_{^{0}}^{_{n}})\D| \vect{\nu}_{^0}^{_{n-1}}\big]A + \Pr(\nu_n=1|\vect{\nu}_{^0}^{_{n-1}}) \notag \\
{}& \times \bigg[ L_n^{\comp^T} \big(Q_2 + B^T S_{n+1}^\sopt \Sst(\nu_n=1,\vect{\nu}_{^0}^{_{n-1}})\D B\big) L_n^{\comp} \notag \\
{}& - A^T S_{n+1}^\sopt \Sst(\nu_n=1,\vect{\nu}_{^0}^{_{n-1}})\D B L_n^{\comp} - L_n^{\comp^T} B^T S_{n+1}^\sopt \Sst(\nu_n=1,\vect{\nu}_{^0}^{_{n-1}})\D A\bigg] \quad , \label{Eq:S_n_corr}
\end{align}
and $\E\big[S_{n+1}^\sopt \Sst(\vect{\nu}_{^{0}}^{_{n}})\D |
\vect{\nu}_{^0}^{_{n-1}}\big]$ is computed in a similar manner to
\eqref{Eq:ExpS}.  The control gain $L_n^{\comp}$ can be the gain of a
comparative controller (e.g., $L_n^\IID$ or $L_n^\ON$) where the
statistical model for the packet delivery sequence is mismatched to
the actual model.  \qed
\end{remark}

This can be seen from the proof of Theorem \ref{Thm:fpd}, if we
substitute for the control input with $u_n^{\sopt}=-L_n^{\comp}x_n$ in
\eqref{Eq:V_n}, instead of minimizing the expression to find the
optimal $u_n$. On simplifying, we get the solution to the cost-to-go
$V_n$ of the form $x_n^T S_n^\sopt \Sst(\vect{\nu}_{^0}^{_{n-1}})\D
x_n + s_n^\sopt \Sst(\vect{\nu}_{^0}^{_{n-1}})\D$, with $S_n^\sopt$
given by \eqref{Eq:S_n_corr} and $s_n^\sopt$ given by
\begin{equation*}
s_n^\sopt \Sst(\vect{\nu}_{^{0}}^{_{n-1}})\D = \tr\left\{ \E\big[ S_{n+1}^\sopt \Sst(\vect{\nu}_{^{0}}^{_{n}})\D|
\vect{\nu}_{^0}^{_{n-1}}\big] R_w\right\} + \E\big[s_{n+1}^\sopt \Sst(\vect{\nu}_{^{0}}^{_{n}})\D|
\vect{\nu}_{^0}^{_{n-1}}\big] \quad .
\end{equation*}
\end{subequations}

\subsection{Algorithm to Compute Optimal Control Gain}
\label{sec:ctrlr_alg}

At sample time $k$, we have $\vect{\nu}_{^0}^{_{k-1}}$. To compute
$L_k$ given in \eqref{Eq:OptL}, we need
$S_{k+1}\Sst(\nu_k=1,\vect{\nu}_{^0}^{_{k-1}})\D$, which can only be
obtained through a backward recursion from $S_N$. This requires
knowledge of $\vect{\nu}_{^k}^{_{N-1}}$, which are unavailable at
sample time $k$. Thus, we must evaluate $\{S_{k+1},\ldots,S_N\}$ for
  every possible sequence of arrivals $\vect{\nu}_{^k}^{_{N-1}}$. This
  algorithm is described below.  \newcounter{counter_alphc}
\begin{enumerate}
\item Initialization: $S_N \Sst(\vect{\nu}_{^k}^{_{N-1}} =
  \vect{\tilde{\nu}}_{^0}^{_{N-k-1}}, \vect{\nu}_{^0}^{_{k-1}})\D = Q_0,$
  $\forall \; \vect{\tilde{\nu}}_{^0}^{_{N-k-1}} \in \{0,1\}^{N-k}$.
\item for $n=N-1:-1:k+1$ \label{Alg:ForLoop}
  \begin{list}{\alph{counter_alphc}) }
  {\usecounter{counter_alphc}}
    \item Using \eqref{Eq:ExpS}, compute
       $\E\big[S_{n+1}\Sst(\vect{\nu}_{^{0}}^{_{n}})\D |
        \vect{\nu}_{^0}^{_{k-1}},\vect{\tilde{\nu}}_{^0}^{_{n-k-1}}\big],$
       $\forall \; \vect{\tilde{\nu}}_{^0}^{_{n-k-1}} \in \{0,1\}^{n-k}$.
    \item Using \eqref{Eq:S_n}, compute
       $S_n \Sst(\vect{\nu}_{^k}^{_{n-1}} = \vect{\tilde{\nu}}_{^0}^{_{n-k-1}},
        \vect{\nu}_{^0}^{_{k-1}})\D,$
       $\forall \; \vect{\tilde{\nu}}_{^0}^{_{n-k-1}} \in \{0,1\}^{n-k}$.
  \end{list}
\item \label{Alg:Sfinal} Compute $L_k$ using $S_{k+1}\Sst(\nu_k=1,\vect{\nu}_{^0}^{_{k-1}})\D$.
\end{enumerate}

For $k=0$, the values $S_0$, $\E[S_1\Sst(\nu_0)\D]$, and the other values
obtained above can be used to evaluate the cost function according to
\eqref{Eq:LQGCriterion2}.

\subsection{Computational Complexity of Optimal Control Gain}
\label{sec:ctrlr_dicuss}

The FPD controller is optimal but computationally expensive, as it
requires an enumeration of all possible packet delivery sequences from
the current sample time until the end of the control horizon to
calculate the optimal control gain \eqref{Eq:OptL} at every sample time
$k$.

\begin{property}
\label{prop:fpd_ctrlr_comp_complex}
The algorithm presented in Section \ref{sec:ctrlr_alg} for computing
the optimal control gain for the FPD controller takes
$O(q^3(N-k)2^{N-k})$ operations at each sample time $k$, where $q =
\max(\ell,m)$ and $\ell$ and $m$ are the dimensions of the state and
control vectors.
\end{property}

\begin{proof}
The computational complexities below assume that multiplying an $\ell
\times m$ matrix with a $m \times n$ matrix takes $O(\ell mn)$, and
that inverting an $\ell \times \ell$ matrix takes $O(\ell^3)$.

For the computation of $L_k$ in \eqref{Eq:OptL}, we need to run the
algorithm presented in Section~\ref{sec:ctrlr_alg}. The steps within
the for-loop (Step~\ref{Alg:ForLoop}) of the algorithm require matrix
multiplications and inversions that take
$O((2\ell^3+6\ell^2m+2\ell^2+2\ell m^2+m^3+m^2)2^{N-k})$ operations, or
$O(q^3 2^{N-k})$ operations if we let $q=\max(\ell,m)$. This must be
repeated $N-k-1$ times in the for-loop, so Step~\ref{Alg:ForLoop}
takes $O(q^3(N-k-1)2^{N-k})$.

Finally, Step~\ref{Alg:Sfinal} takes $O(4\ell^2m+\ell
m^2+m^3+m^2)$ operations for the matrix multiplications and
inversions, which simplifies to $O(q^3)$. Combining these results and
simplifying yields the computational complexity given in
Property~\ref{prop:fpd_ctrlr_comp_complex}.
\end{proof}

For the SIHS network model, once the network state estimates from the SIHS
network estimator converge, the conditional probabilities 
$f_{\vect{\nu}_{^{k}}^{_{k+H-1}}}$ will not change and
the computations can be reused. But, for a
network that evolves over time, like the GEIHS network model, the
computations cannot be reused, and the computational cost remains high.

\section{Examples and Simulations}
\label{sec:ex_sim}

Using the system architecture depicted in
Figure~\ref{fig:system_diagram}, we will demonstrate the GEIHS network
estimator on a small mesh network and use the packet delivery
predictions in our FPD controller.  Figure~\ref{fig:ex_mesh_net} depicts
the routing topology and short repeating schedule of the network.
Packets are generated at the source every 409 time
slots,\footnote{Effectively, the packets are generated every $9+4K$
time slots, where $K$ is a very large integer, so we can assume slow
system dynamics with respect to time slots and ignore the delay
introduced by the network.} and the packet delivery deadline is
$t_k'-t_k = 9, \forall k$.  The network estimator assumes all links
have $p^\mathrm{u} = 0.0135$ and $p^\mathrm{d} = 0.0015$.

\begin{figure}
\begin{center}
\begin{tabular}{c}
{\large Routing Topology}\\
\includegraphics[width=5cm]{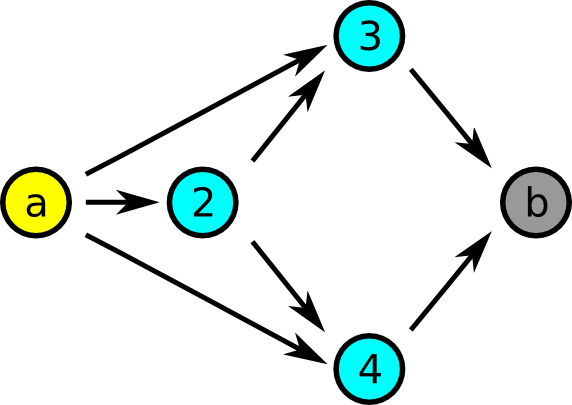}\\
\\
{\large Schedule}\\
\includegraphics[width=10cm]{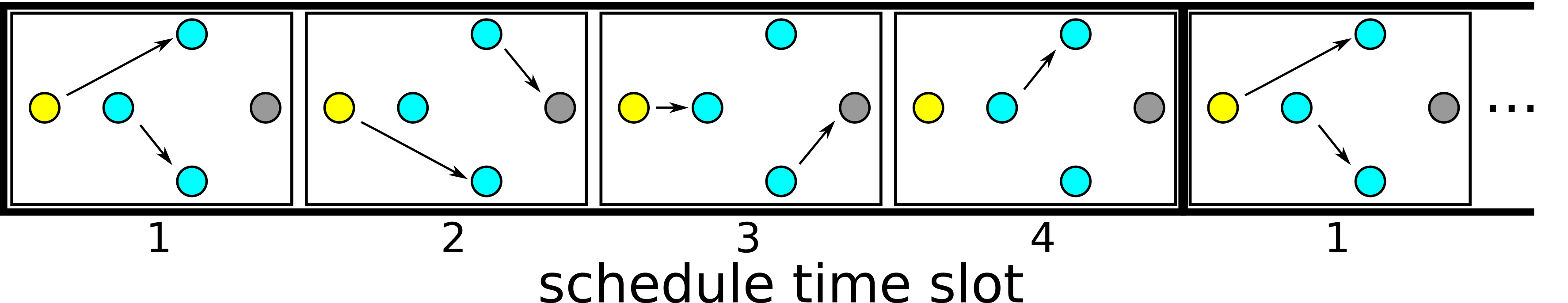}
\end{tabular}
\caption{Example of a simple mesh network for network estimation.}
\label{fig:ex_mesh_net}
\end{center}
\end{figure}

The packet delivery predictions from the network estimator are shown
in Figure~\ref{fig:mesh_net_pkt_pred}.  Although the network estimator
provides
$f_{\vect{\nu}_{^k}^{_{k+H-1}}}(\vect{\tilde{\nu}}_{^0}^{_{H-1}})$, at
each sample time $k$ we plot the average prediction
$\E[\vect{\nu}_{^k}^{_{k+H-1}}]$.  In this example, all the links are
up for $k\in\{1,\ldots,4\}$ and then link $(3,b)$ fails from $k=5$
onwards.  After seeing a packet loss at $k=5$, the network estimator
revises its packet delivery predictions and now thinks there will be a
packet loss at $k=7$.  The average prediction for the packet delivery
at a particular sample time tends toward 1 or 0 as the network
estimator receives more information (in the form of packet deliveries)
about the new state of the network.

\begin{figure}
\begin{center}
\includegraphics[width=10cm]{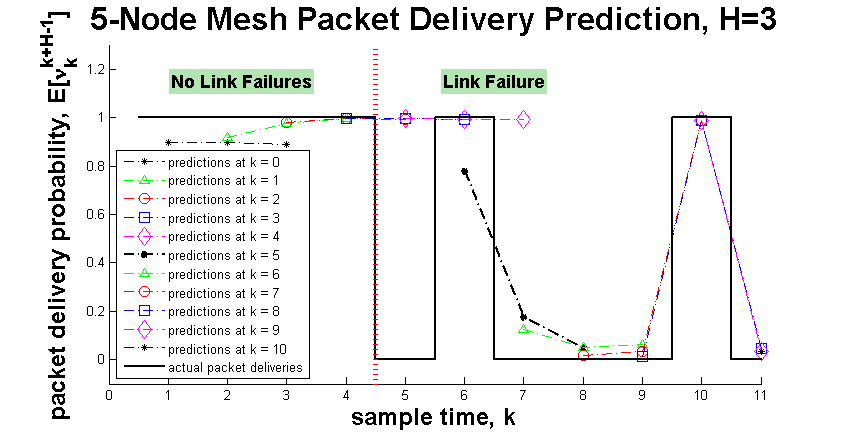}
\caption{Packet delivery predictions when network in
  Figure~\ref{fig:ex_mesh_net} has all links up and then link $(3,b)$ fails.}
\label{fig:mesh_net_pkt_pred}
\end{center}
\end{figure}

The prediction for $k=7$ (packet generated at schedule time slot 3) at
$k=5$ is influenced by the packet delivery at $k=5$ (packet generated
at schedule time slot 1) because hop-by-hop routing allows the packets
to traverse the same links under some realizations of the underlying
routing topology $G$.  Mesh networks with many interleaved paths allow
packets generated at different schedule time slots to provide
information about each others' deliveries, provided the links in the
network have some memory.  As discussed in
Section~\ref{sec:net_est_discuss}, since a packet delivery provides
only one bit of information about the network state, it may take
several packet deliveries to get good predictions after the network
changes.

\begin{figure}
\begin{center}
\includegraphics[width=10cm]{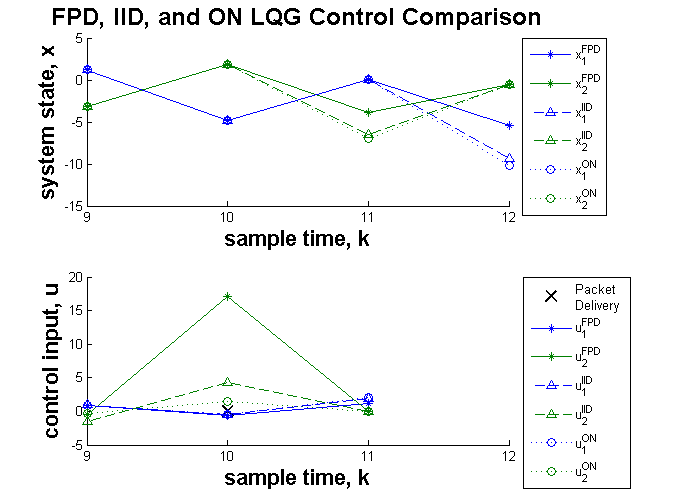}
\caption{Plot of the different control signals and state vectors when
  using the FPD controller, an IID controller, and an ON controller
  (see text for details).}
\label{fig:ControllerComparison}
\end{center}
\end{figure}

Now, consider a linear plant with the following parameters
\begin{small}
\begin{align*}
A = \left[ \begin{array}{cc}
0 & 1.5 \\
1.5 & 0 \end{array} \right],
B = \left[ \begin{array}{cc}
5 & 0 \\
0 & 0.2 \end{array} \right],
R_w &= \left[ \begin{array}{cc}
0.1 & 0 \\
0 & 0.1 \end{array} \right],
R_0 = \left[ \begin{array}{cc}
10 & 0 \\
0 & 10 \end{array} \right] \\
Q_1 = Q_2 = \left[ \begin{array}{cc}
1 & 0 \\
0 & 1 \end{array} \right],
Q_n &= \left[ \begin{array}{cc}
10 & 0 \\
0 & 10 \end{array} \right]
\quad .
\end{align*}
\end{small}
The transfer matrix $A$ flips and expands the components of the state
at every sampling instant. The input matrix $B$ requires the second
component of the control input to be larger in magnitude than the
first component to have the same effect on the respective component of
the state. Also, the final state is weighted more than the other
states in the cost criterion.  We compare the three finite horizon LQG
controllers discussed in Section \ref{sec:control_syn}, namely the FPD
controller, the IID controller, and the ON controller with their costs
\eqref{Eq:LQGCriterion2} and \eqref{Eq:Cost_sopt}.

The controllers are connected to the plant at sample times $k=9,10,11$
through the network example given in Figure~\ref{fig:mesh_net_pkt_pred}.
Figure~\ref{fig:ControllerComparison} shows the control signals computed
by the different controllers and the plant states when the control
signals are applied following the actual packet delivery sequence.
From the predictions at $k=8,9,10$ in
Figure~\ref{fig:mesh_net_pkt_pred}, we see that the FPD controller has
better knowledge of the packet delivery sequence than the other two
controllers.  The FPD controller uses this knowledge to
compute an optimal control signal that outputs a large magnitude for
the second component of $u_{10}$, despite the high cost of this
signal. The IID and ON controllers believe the control packet is
likely to be delivered at $k=11$ and choose, instead, to output a
smaller increase in the first component of $u_{11}$, since this will
have the same effect on the final state if the control packet at
$k=11$ is successfully delivered.

The FPD controller is better than the other controllers at driving the
first component of the state close to zero at the end of the control
horizon, $k=12$. Thus, the packet delivery predictions from the
network estimator help the FPD controller significantly lower its LQG
cost, as shown in Table~\ref{Tb:3C_Cost}. The costs reported here are
obtained from Monte-Carlo simulations of the system, averaged over
10,000 runs, but with the network state set to the one described
above.

\begin{table}%
\begin{center}
\caption{Simulated LQG Costs (10,000 runs) for
         Example Described in Section~\ref{sec:ex_sim}}
\label{Tb:3C_Cost}
\begin{tabular}{| c | c | c |}
\hline
FPD Controller & IID Controller & ON Controller \\ \hline \hline
681.68 & 1,008.2 & 1,158.9 \\ \hline
\end{tabular}
\end{center}
\end{table}

\section{Discussion on Network Model Selection}
\label{sec:net_model_select}

The ability of the network estimator to accurately predict packet
deliveries is dependent on the network model.  A natural objection to
the GEIHS network model is that it assumes links are independent and
does not capture the full behavior of a lossy and bursty wireless link
through the G-E link model \cite{willig:2002}.  Why not
use one of the more sophisticated link models mentioned by Willig et al.
\cite{willig:2002}?  Why not use a network model that can capture
correlation between the links in the network?  A good network model
must be rich enough to capture the relevant behavior of the actual
network, but not have too many parameters that are difficult to
obtain.

In our problem setup, the relevant behavior is the packet delivery
sequence of the network.  As mentioned in
Section~\ref{sec:net_est_discuss}, the network state probability
distribution does not need to identify the exact network topology
realization to get precise packet delivery predictions.  In this
regard, the GEIHS network model has too many states ($2^E$ states) and
may be overmodeling the actual network.  However, the more relevant
question is: Does the GEIHS network model yield accurate packet
delivery predictions, predictions that are close to the actual future
packet delivery sequence?  Do the simplifications from assuming link
independence and using a G-E link model result in
inaccurate packet delivery predictions?  These questions need further
investigation, involving real-world experiments.

Our GEIHS network model has as parameters the routing topology $G$,
the schedule $\mathbf{F}^{(T)}$, the G-E link transition probabilities
$\{p_l^\mathrm{u},p_l^\mathrm{d}\}_{l \in \Ec}$, the source $a$, the
sink $b$, the packet generation times $t_k$, and the deadlines $t_k'$.
The most difficult parameters to obtain are the link transition
probabilities, which must be estimated by link estimators running on
the nodes and relayed to the GEIHS network estimator.  Furthermore, on
a real network these parameters will change over time, albeit at a
slower time scale than the link state changes.  The issue of how to
obtain these parameters is not addressed in this paper.

Despite its limitations, the GEIHS network model is a good basis for
comparisons when other network models for our problem setup are
proposed in the future.  It also raises several related research
questions and issues.

Are there classes of routing topologies where packet delivery
statistics are less sensitive to the parameters in our G-E link model
$p_l^\mathrm{u}$ and $p_l^\mathrm{d}$?  How do we build networks
(e.g., select routing topologies and schedules) that are ``robust'' to
link modeling error and provide good packet delivery statistics (e.g.,
low packet loss, low delay) for NCSs?  The latter half of the
question, building networks with good packet delivery statistics, is
partially addressed by other works in the literature like Soldati et
al. \cite{soldati:2010}, which studies the problem of scheduling a network
to optimize reliability given a routing topology and packet delivery
deadline.

Another issue arises when we use a controller that reacts to estimates
of the network's state.  In our problem setup, if the network
estimator gives wrong (inaccurate) packet delivery predictions, the
FPD controller can actually perform \emph{worse} than the ON
controller.  How do we design FPD controllers that are robust to
inaccurate packet delivery predictions?

\section{Conclusions}
\label{sec:conclusions}

This paper proposes two network estimators based on simple network
models to characterize wireless mesh networks for NCSs.  The goal is
to obtain a better abstraction of the network, and interface to the
network, to present to the controller and (future work) network
manager.  To get better performance in a NCS, the network manager
needs to \emph{control and reconfigure} the network to reduce outages
and the controller needs to \emph{react to or compensate for} the
network when there are unavoidable outages.  We studied a specific NCS
architecture where the actuation channel was over a lossy wireless mesh
network and a network estimator provided packet delivery predictions
for a finite horizon, Future-Packet-Delivery-optimized LQG controller.

There are several directions for extending the basic problem setup in
this paper, including those mentioned in 
Sections~\ref{sec:net_est_comp_complex}, \ref{sec:net_est_discuss}, 
and \ref{sec:net_model_select}.  For instance, placing the network
estimator(s) on the actuators in the general system architecture
depicted in Figure~\ref{fig:dist_ctrl_mesh_net} is a more realistic
setup but will introduce a lossy channel between the network
estimator(s) and the controller(s).  Also, one can study the use of
packet delivery predictions in a receding horizon controller rather
than a finite horizon controller.

\bibliography{techrep10-predPktCtrl}
\bibliographystyle{IEEEtran}

\appendix
\onecolumn

\end{document}